\documentclass[10pt,superscriptaddress,aps,pra,twocolumn,showpacs,nofootinbib,longbibliography, floatfix]{revtex4-2}

\usepackage{silence}
\WarningFilter{nameref}{The definition of \label has changed}

\usepackage[T1]{fontenc}

\usepackage{amsmath,amssymb,amsthm,amstext,amsfonts}
\usepackage{mathrsfs}
\usepackage{physics}     

\usepackage{braket}

\usepackage{latexsym}
\usepackage{gensymb}
\usepackage{color}
\usepackage{adjustbox}
\usepackage{soul}
\usepackage[normalem]{ulem}

\usepackage{graphicx}
\usepackage{epstopdf}
\usepackage{subcaption}
\captionsetup{justification=justified, singlelinecheck=false}  

\usepackage{lipsum}

\usepackage[colorlinks=true, citecolor=blue, urlcolor=blue]{hyperref}



\newcommand{\be}{\begin{equation}}
\newcommand{\ee}{\end{equation}}
\newcommand{\ba}{\begin{eqnarray}}
\newcommand{\ea}{\end{eqnarray}}


\newtheorem{lemma}{Lemma}



\begin{document}

\title{Secure One-Sided Device-Independent Quantum Key Distribution Under Collective Attacks with Enhanced Robustness}

\author{Pritam Roy}
\email{roy.pritamphy@gmail.com}
\affiliation{S. N. Bose National Centre for Basic Sciences, Block JD, Sector III, Salt Lake, Kolkata 700 106, India}

\author{Subhankar Bera}
\email{berasanu007@gmail.com}
\affiliation{S. N. Bose National Centre for Basic Sciences, Block JD, Sector III, Salt Lake, Kolkata 700 106, India}

\author{A. S. Majumdar}
\email{archan@bose.res.in}
\affiliation{S. N. Bose National Centre for Basic Sciences, Block JD, Sector III, Salt Lake, Kolkata 700 106, India}

\begin{abstract}
We study the security of a quantum key distribution (QKD) protocol under the one-sided device-independent (1sDI) setting, which assumes trust in only one party’s measurement device. This approach effectively provides a balance between the experimental viability of device-dependent (DD-QKD) and the minimal trust assumptions of device-independent (DI-QKD).  An analytical lower bound on the asymptotic key rate is derived to provide security against collective attacks, in which the eavesdropper's information is limited only by the function of observed violation of a linear quantum steering inequality, specifically the three-setting Cavalcanti–Jones–Wiseman–Reid (CJWR) inequality. We provide a closed-form key rate formula by reducing the security analysis to mixtures of Bell-diagonal states by utilizing symmetries of the steering functional. We show that the protocol tolerates higher quantum bit error rates (QBER) than present DI-QKD protocols by benchmarking its performance under depolarizing noise. Furthermore, we explore the impact of detection inefficiencies and show that, in contrast to DI-QKD, which requires near-perfect detection, secure key generation can be achieved even with lower detection efficiency on the untrusted side. These findings highlight the advantages of 1sDI-QKD as a steering-based alternative for secure quantum communication and provide insights relevant for near-future experimental implementations.

\end{abstract}

\maketitle

\section{Introduction} 
Quantum key distribution (QKD) allows two parties to share a secret key, with security provided by the principles of quantum physics rather than computational assumptions~\cite{Shannon1949, RSA1978}. The seminal BB84 protocol~\cite{BB84} demonstrated that quantum states cannot be measured without disturbing them, enabling the detection of any eavesdropping attempt. Thereafter, the E91 protocol~\cite{Ekert1991} introduced an entanglement-based approach where security is certified via Bell inequality violation\cite{Bell1964, Clauser1969}, and the BBM92 protocol~\cite{BBM92} proposed a related scheme that employs entanglement\cite{HorodeckiQentanglement2009} without relying on nonlocal correlations. Although these protocols are theoretically secure under idealized assumptions, real-world implementations involve imperfect and potentially untrusted devices, opening up vulnerabilities through various adversarial attacks ~\cite{Lydersen2010, Scarani2009}.

Attack strategies in QKD are typically classified into individual, collective, and coherent attacks, in increasing order of generality. In individual attacks, the adversary measures each signal independently~\cite{Ekert1994, Slutsky1998, LutkenhausIndAttack2000, Curty2005, roy2024sequential}, whereas in collective attacks, Eve interacts identically with each signal but defers measurement for joint processing~\cite{Acin2007, Scarani2009}. Coherent attacks are the most powerful, allowing arbitrary joint operations on all signals~\cite{Shor2000, Masanes2011}. Security proofs against these strategies often rely on entanglement~\cite{HorodeckiQentanglement2009} or Bell nonlocality~\cite{Bell1964, Clauser1969, Acin2006}, and were initially device-dependent~\cite{LoChau1999, Shor2000}. To address trust issues in devices, the device-independent QKD (DI-QKD) paradigm has gained prominence, particularly after foundational security results~\cite{Acin2007, Vazirani14}. DI-QKD has since advanced through protocols using asymmetric Bell inequalities~\cite{Woodhead2021}, random states~\cite{BeraRandomQKD2023}, or random measurement bases~\cite{SchwonnekRandomBasis2021}, and experimental demonstrations~\cite{LiuDI-QKDExpt, Zhang2022DIQKD}. Security is guaranteed solely by the violation of Bell inequalities~\cite{Bell1964, Clauser1969}, making DI-QKD the most robust cryptographic framework. However, its implementation remains challenging due to strict requirements such as high detection efficiency~\cite{Pironio2009} and loophole-free Bell tests~\cite{Shalm2015Loopholefreebelltest, Giustina2015Loopholefreebelltest, Li2018LoopholeFreebelltest, HoNoisyPreProcessQKD2020, Zapatero2023}.

To mitigate the practical limitations of DI-QKD, particularly the need for high detection efficiencies and entanglement, several intermediate frameworks have been introduced. Semi-device-independent QKD (SDI-QKD)\cite{PawlowskiSemiDIQKD2011} ensures security by assuming trusted state preparation while leaving measurement devices uncharacterized. Notably, variants based on quantum contextuality\cite{GuptaSemiDiQKD2023} have demonstrated security without relying on nonlocality. In comparison, measurement-device-independent QKD (MDI-QKD)~\cite{Lo2012MDIQKD, YujunMDI2016} achieves security under the assumption of trusted entangled state preparation, even with untrusted measurement devices.

Under ideal collective attacks, control over just one measurement device and the source is sufficient to compromise security, as demonstrated by the security analyses in Refs.~\cite{Acin2007, Pironio2009}. The 1sDI-QKD framework, which fits in between DI-QKD and SDI/MDI/DD-QKD in the hierarchy of trust models, is based on this observation and assumes trust in only one party's device, usually Bob.
The structure of 1sDI-QKD is consistent with the concept of quantum steering, which was first proposed by Schrödinger~\cite{Schrödinger1935}, and  formalized later by Wiseman \textit{et al.}~\cite{WisemanSteering2007}. Various criteria, such as Reid's uncertainty-based test~\cite{Reid1989}, the CJWR inequality~\cite{CavalcantiExptCriteriaSteering2009, ChenCJWR2013}, entropic~\cite{Schneeloch2013}, fine-grained~\cite{Pramanik2014}, and sum-uncertainty-relation-based formulations~\cite{sumuncert}, can
be used to identify steering, which captures the ability of a trusted party to nonlocally affect the state of an untrusted party. Its distinction from entanglement and Bell nonlocality has been established both theoretically~\cite{Jones2007Steering, Walborn2011} and experimentally~\cite{CavalcantiExptCriteriaSteering2009, Saunders2010EPRSteering}, and further quantified using dedicated measures~\cite{CostaQuantificationSteering2016, UolaQuantumsteering2020, CavalcantiSteeringReview2017, JebaSteeringCost}. 

Several protocols have explored the 1sDI-QKD regime under various assumptions and models~\cite{Branciard2012, Tomamichel1sDIQKD2013, Pramanik2014, Mukherjee2023SteeringQKD, masini20241sDIQKD}. The works of Branciard et al.~\cite{Branciard2012} and Tomamichel et al.~\cite{Tomamichel1sDIQKD2013} have focused on entropy-based security proofs for BBM92-like or prepare-and-measure schemes, and although they align with the steering scenario, their security bounds depend solely on QBER and are not explicitly connected to steering inequality violations. Pramanik et al.~\cite{Pramanik2014} considered individual attacks and demonstrated steering-based security only in that limited regime. Mukherjee et al.~\cite{Mukherjee2023SteeringQKD} focused on the usefulness of steerable states in QKD but did not analyze explicit attack models. More recently, Masini and Sarkar~\cite{masini20241sDIQKD} have employed a semidefinite-programming-based approach for proving security against coherent attacks, but without deriving closed-form expressions.

To the best of our knowledge, no existing 1sDI-QKD protocol derives a closed-form asymptotic key rate that quantitatively depends on the violation of a steering inequality, in direct analogy with how DI-QKD protocols relate Bell violations to Eve's information~\cite{Acin2007, Pironio2009}. This gap motivates the need for analytical key rate expressions based on observable steering violations, which would simplify certification and enhance practical relevance.

The DI-QKD security proof by Acín et al.\cite{Acin2007} is notable for analytically linking Bell inequality violations to asymptotic key rates, enabling device-independent certification based on observable quantities. Motivated by this, in the current work we establish a closed-form bound for 1sDI-QKD where the key rate is directly expressed in terms of steering inequality violation, thus operationalizing the role of steering in secure key generation. Among various criteria~\cite{Reid1989, CavalcantiExptCriteriaSteering2009, Walborn2011, Schneeloch2013, Pramanik2014}, the Cavalcanti–Jones–Wiseman–Reid (CJWR) inequality~\cite{CavalcantiExptCriteriaSteering2009, ChenCJWR2013} is especially suited for this task due to its linearity, geometric clarity, and analytical applicability to a broad class of two-qubit states. 

In this work, we evaluate a 1sDI-QKD protocol that employs the CJWR steering inequality as a security witness. The protocol uses three binary Pauli measurements per party, with key bits extracted from rounds where both parties measure along \(\sigma_z\), ensuring low data leakage. The other rounds are used to estimate the CJWR parameter \(\mathcal{F}_3\), enabling security verification without basis reconciliation. The security of our protocol is analyzed under collective attacks, where the adversary applies identical and independent operations across rounds. A composable lower bound on the asymptotic key rate is derived using the Devetak–Winter formula, with the CJWR violation acting as the key security witness. Leveraging dimensionality reduction and symmetry arguments, we focus on Bell-diagonal states, for which both the Holevo quantity and the CJWR parameter admit closed-form expressions. This enables an explicit key rate formula in terms of the observed QBER and steering inequality violation.

We evaluate the robustness of our CJWR-based 1sDI-QKD protocol under depolarizing noise, showing that it tolerates a QBER of up to 8.62\%, higher than standard DI-QKD protocols, while relying on weaker trust assumptions than DD schemes. To address practical imperfections, we model detection inefficiency on the untrusted side via null outcomes and derive key rate expressions for both post-selected and non-post-selected scenarios. Our analysis reveals that secure key generation remains possible with detection efficiencies as low as 74.5\% under ideal visibility, surpassing typical DI-QKD detection efficiency thresholds~\cite{Pironio2009, Masanes2011, Woodhead2021, Branciard2012} and emphasizing the protocol’s practical advantage in lossy settings.

The manuscript is organized as follows. Section~\ref{CJWRmotivation} motivates the use of the CJWR inequality in the context of one-sided device-independent quantum key distribution along with outlining the general framework of 1sDI-QKD. In Section~\ref{securityproof}, we present the security proof under optimal collective attacks. Section~\ref{robustness} discusses the robustness of the CJWR-based 1sDI-QKD protocol, while Section~\ref{DetectionEfficiency} examines the effects of detection inefficiency. Finally, Section~\ref{Salient} highlights the salient features of our approach and outlines future research directions.

\section{CJWR-based 1sDI-QKD Protocol}\label{CJWRmotivation}

Entanglement-based QKD provides inherent security based on quantum mechanics. The BBM92 scheme~\cite{BBM92} depends on strong measurement correlations without utilizing nonlocality, whereas the Ekert91 protocol~\cite{Ekert1991} certifies key security through violations of a Bell inequality. Due to its sensitivity to loss and detection inefficiencies, DI-QKD~\cite{Acin2007, Pironio2009, Scarani2009, LiuDI-QKDExpt} undermines trust in all devices but is experimentally demanding.
The 1sDI-QKD~\cite{Branciard2012, Tomamichel1sDIQKD2013}, where only one party’s device is trusted (typically Bob’s), provides a practical alternative. Here, quantum steering~\cite{Schrödinger1935,WisemanSteering2007}, an intermediate form of nonclassicality, enables security certification against an untrusted device, making 1sDI-QKD more tolerant to experimental imperfections.

Quantum Steering~\cite{Schrödinger1935, WisemanSteering2007} is a form of quantum correlation that lies between entanglement and Bell nonlocality. It captures the ability of one party to influence the conditional state of another through local measurements nonlocally. A bipartite state is said to be steerable when it cannot be described by a local hidden state (LHS) model, where the trusted party's outcomes arise from a pre-existing quantum ensemble independent of the untrusted party’s measurement choice. Violations of steering inequalities~\cite{CavalcantiExptCriteriaSteering2009, ChenCJWR2013, Schneeloch2013, Pramanik2014, Saunders2010EPRSteering} thus serve as one-sided device-independent witnesses of quantumness. Among these, the CJWR inequality~\cite{CavalcantiExptCriteriaSteering2009} offers a symmetric and experimentally~\cite{Saunders2010EPRSteering, BennetExptSteering2012} friendly criterion for detecting steering in two-qubit systems, making it particularly suited for 1sDI-QKD protocols.


\begin{figure*}[ht!]
    \centering
    \includegraphics[width=0.75\linewidth]{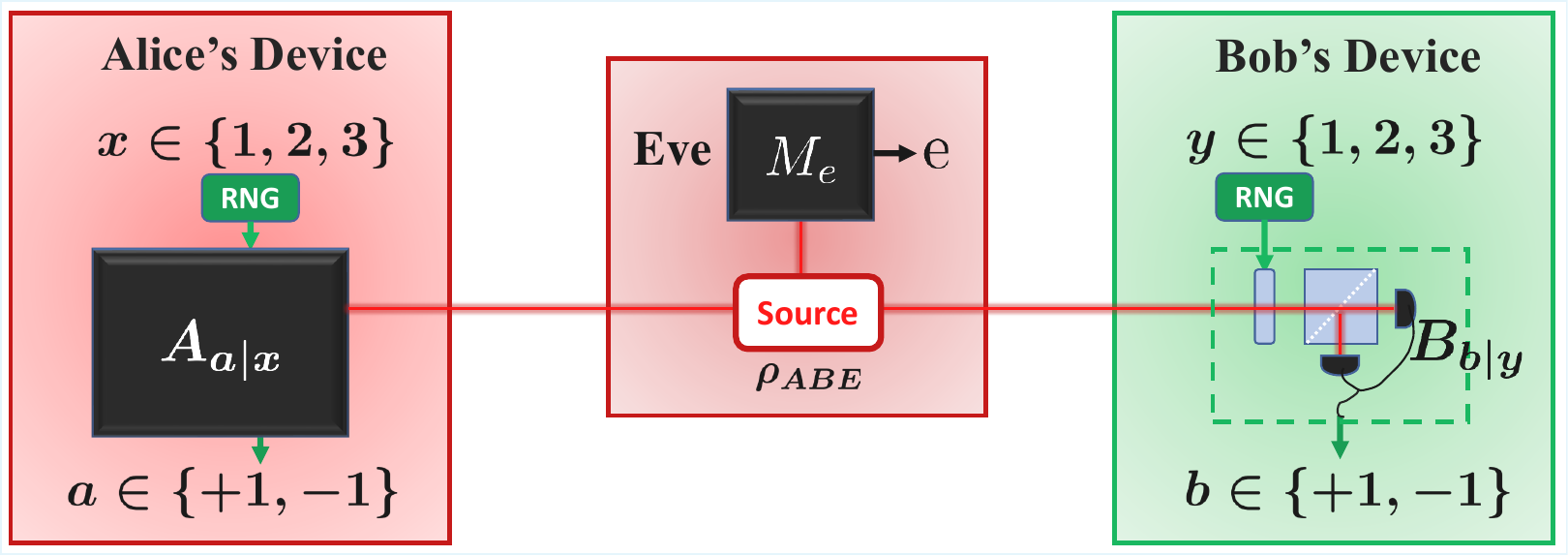}
    \caption{ Schematic of a one-sided device-independent QKD protocol. A source distributes entangled two-qubit states to Alice and Bob. Alice’s device is untrusted (black box), while Bob’s is fully trusted. Inputs \(x, y \in \{1,2,3\}\) are chosen using trusted random number generators, yielding binary outcomes \(a, b \in \{+1,-1\}\). Security is certified via quantum steering, e.g., violation of the CJWR inequality.}
    \label{1SDIfig}
\end{figure*}

For $n$ measurement settings per party, the CJWR steering function is defined as
\begin{equation}
    \mathcal{F}_n(\rho, \mu) = \frac{1}{\sqrt{n}} \left| \sum_{i=1}^{n} \langle A_i \otimes B_i \rangle \right| \leq 1,
    \label{CJWR}
\end{equation}
where $A_i = \hat{u}_i \cdot \vec{\sigma}$, $B_i = \hat{v}_i \cdot \vec{\sigma}$, and $\mu = \{\hat{u}_i; \hat{v}_i\}_{i=1}^{n}$ specifies the measurement directions, with $\hat{v}_i$ orthonormal and $\hat{u}_i$ unit vectors in $\mathbb{R}^3$.

In two-qubit systems, the two-setting CJWR inequality $\mathcal{F}_2$ fails to identify certain steerable states that remain Bell local~\cite{Saunders2010EPRSteering, CostaQuantificationSteering2016}. To detect such states and highlight the distinction between steering and Bell nonlocality, we consider the three-setting measure $\mathcal{F}_3$. This steering function can alternatively be expressed in terms of the singular values $\lambda_1, \lambda_2, \lambda_3$ of the correlation matrix $T$ associated with $\rho_{AB}$, where the matrix elements are defined as $t_{ij} = \mathrm{Tr}[(\sigma_i \otimes \sigma_j) \rho_{AB}]$. These singular values characterize the strength of quantum correlations between the two subsystems along orthogonal measurement directions. In this form, the steering function becomes
\begin{equation}
\mathcal{F}_3 = \sqrt{ \lambda_1^2 + \lambda_2^2 + \lambda_3^2 }.
\end{equation}
A violation of the bound, i.e., $\mathcal{F}_3 > 1$, confirms that the state $\rho_{AB}$ is steerable from Alice to Bob. Such violations serve as a practical witness for steering-based quantum correlations and form the foundation for establishing security in 1sDI-QKD protocols.

The three-setting CJWR inequality shares a symmetry structure similar to the CHSH inequality but captures a different class of nonclassical correlations, particularly relevant in one-sided device-independent scenarios. Its experimental applicability and robustness motivate the adoption of the CJWR function as the steering witness in our 1sDI-QKD protocol, as detailed below.

\textit{Protocol Overview:} Alice and Bob share the maximally entangled Bell state \( |\phi^+\rangle = \frac{1}{\sqrt{2}}(|00\rangle + |11\rangle) \), i.e., \( \rho_{AB} = |\phi^+\rangle \langle \phi^+| \in \mathbb{C}^2 \otimes \mathbb{C}^2 \). Bob's measurement device is trusted and fully characterized, while Alice’s is treated as a black box. Both parties randomly choose inputs \( x, y \in \{1,2,3\} \), corresponding to Pauli observables: \( A_1 = \sigma_x \), \( A_2 = -\sigma_y \), \( A_3 = \sigma_z \) for Alice; and \( B_1 = \sigma_x \), \( B_2 = \sigma_y \), \( B_3 = \sigma_z \) for Bob (See Fig.\ref{1SDIfig}).

Security is certified through the violation of the CJWR steering inequality for 3-setting ($n=3$) Eq.~\eqref{CJWR} :
\begin{equation}\label{CJWRineq}
\mathcal{F}_3 = \frac{1}{\sqrt{3}} \left| \sum_{i=1}^3 \langle A_i \otimes B_i \rangle \right| \leq 1,
\end{equation}
with $\mathcal{F}_3 > 1$ indicating steerability from Alice to Bob despite Alice’s untrusted device.

Only the rounds in which both parties measure in the \( \sigma_z \) basis (i.e., \( x = y = 3 \)) are used for raw key generation, and the corresponding outcomes are kept secret. In contrast, the outcomes from rounds where the measurement settings span all three Pauli bases (\( x, y \in \{1,2,3\} \)) are publicly disclosed and used solely for evaluating the steering parameter \( \mathcal{F}_3 \). This separation between security estimation and key extraction prevents basis mismatch and simplifies the key rate analysis. Moreover, by disclosing outcomes only in the non-key-generating rounds, the protocol limits information leakage and aligns structurally with Bell-based DI-QKD protocols, such as that of Acín \textit{et al.}~\cite{Acin2007}, enabling a direct comparison of steering- and Bell-based security frameworks. The quantum bit error rate (QBER) quantifies the probability that Alice and Bob obtain different outcomes when measuring in the same basis. Specifically, the QBER is defined as,

\begin{equation}
\label{QBER}
    Q = p(a_3 \ne b_3 \mid A_3, B_3),
\end{equation} 
representing the probability that their outcomes disagree when both measure observables \( A_3 \) and \( B_3 \), which ideally should yield identical results in the absence of noise or eavesdropping.

After parameter estimation, Alice and Bob proceed to the classical post-processing stage. They first perform \emph{error correction} over an authenticated classical channel to reconcile discrepancies in their raw keys. This is followed by \emph{privacy amplification}, typically using universal hashing~\cite{Renner2008}, to compress the reconciled key and remove any partial information available to an eavesdropper. The amount of extractable secret key is directly determined by the measured QBER and the observed steering violation \( F_3 \), ensuring composable security even in the presence of device imperfections on Alice's side.

\section{Security Analysis Against Collective Attacks}\label{securityproof}

In our protocol, only Bob’s measurement device is assumed to be trusted and well characterized, while Alice’s device is treated as completely untrusted. This setting defines the 1sDI scenario, where the security of the key must be established without relying on any knowledge about Alice’s internal operations. Instead, all security claims are based entirely on the observed correlations between Alice and Bob’s measurement outcomes.

Although this model is less restrictive than DI-QKD, it still provides strong security guarantees. In particular, it avoids certain experimental challenges such as the detection loophole, which limits the practicality of device-independent approaches~\cite{Pironio2009}. In what follows, we show how the violation of the CJWR steering inequality, along with the measured QBER, can be used to certify the presence of secure correlations and to derive a bound on the secret key rate. The CJWR inequality violation serves as evidence of nonclassical steering correlations, assuming trust only in Bob's measurement device. Together, these quantities allow Alice and Bob to estimate Eve's accessible information.

We assume that Alice, Bob, and Eve share a tripartite pure state
\(\ket{\Phi_{\text{ABE}}} \in \mathcal{H}_{\text{A}}^{\otimes N} \otimes \mathcal{H}_{\text{B}}^{\otimes N} \otimes \mathcal{H}_{\text{E}}
\), where \(N\) is the number of rounds used for key generation. Without loss of generality, we take the local Hilbert spaces to be of equal finite dimension, i.e., \(\mathcal{H}_{\text{A}} \simeq \mathcal{H}_{\text{B}} \simeq \mathbb{C}^{d}\).

For the security analysis, we assume that Eve performs a collective, or i.i.d., attack~\cite{Devetak2005, Acin2006njp, Acin2006, Acin2007, Pironio2009, Barrett2005a, KonigMinMaxEntropy2009, Masanes2011, Vazirani14, Tomamichel2012FiniteKey, FriedmanEAT2018, Dupuis2020EAT, Scarani2009}. In this setting, the state and measurement procedure used by Eve are the same in every round and independent across rounds. As a result, the total shared state between Alice, Bob, and Eve takes the form \(
\ket{\Phi_{\text{ABE}}} = \ket{\phi_{\text{ABE}}}^{\otimes N},\)
where \(\ket{\phi_{\text{ABE}}}\) is the state shared in a single round of the protocol. In addition, we assume that the devices are memoryless, meaning the measurement in each round depends only on the current input and not on previous rounds.

We adopt a one-way classical post-processing strategy from Bob to Alice, following the approach of Refs.~\cite{Acin2006njp, Acin2006, Acin2007,Pironio2009}. Under this setting, the asymptotic key rate can be bounded using the Devetak–Winter formula\cite{Devetak2005}:
\begin{equation}\label{1sDIKeyrate}
    r^{\text{1sDI}} \geq I(A_3:B_3) - \chi(B_3:E),
\end{equation}
where \(I(A_3:B_3)\) is the mutual information between Alice and Bob, and \(\chi(B_3:E)\) is the Holevo quantity, representing an upper bound on the information accessible to Eve about Bob’s outcomes.

The choice of Bob-to-Alice communication is particularly advantageous in our protocol, as also discussed in Ref.~\cite{Acin2006njp}. Since Bob's device is trusted, this direction of classical post-processing allows for a tighter bound on Eve's accessible information. Consequently, the relevant leakage term in the key rate expression is \(\chi(B_3:E)\), rather than \(\chi(A_3:E)\), which would apply if Alice's data were revealed in the classical post-processing step.

Our main objective is to derive an upper bound on Eve's accessible information, quantified by the Holevo quantity \(\chi(B_3:E)\). To achieve this, we follow a sequence of steps outlined below:

\textit{Step 1:} To compute a tight upper bound on Eve's accessible information, we simplify the analysis using techniques similar in spirit to those introduced in Ref.~\cite{Pironio2009}. In the one-sided device-independent scenario, Bob's measurement device is fully trusted and assumed to perform Pauli measurements. Therefore, even if Eve prepares a general state in \(\mathbb{C}^d \otimes \mathbb{C}^d\), Bob effectively accesses only a two-dimensional Hilbert space. As a result, without loss of generality, we can assume that the shared state distributed by Eve is a qudit-qubit state. Certain symmetries inherent in the CJWR inequality can be used to simplify the analysis by reducing the effective dimension of the shared state. This is formalized in the following lemma.

\begin{lemma}[Reduction to a two-qubit subspace]
Let \( \rho \in \mathbb{C}^d \otimes \mathbb{C}^2 \) be a bipartite quantum state, where Bob performs projective measurements along the Pauli directions \( \sigma_1, \sigma_2, \sigma_3 \), and Alice performs Hermitian dichotomic observables \( A_1, A_2, A_3 \) satisfying \( A_l^2 = \mathbb{I} \). Then, the quantum steering expression
\begin{equation}
\mathcal{F}_3(\rho) := \frac{1}{\sqrt{3}} \left| \sum_{l=1}^3 \langle A_l \otimes \sigma_l \rangle_\rho \right|
\end{equation}
is bounded by \( \mathcal{F}_3(\rho) \leq \sqrt{3} \). The bound is tight if and only if the observables \( A_l \) mutually anticommute. In such cases, the optimal value is achieved in a two-qubit system.
\end{lemma}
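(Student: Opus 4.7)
The plan is to encode the steering expression as the expectation of a Bell-like observable $C := \sum_{l=1}^3 A_l \otimes \sigma_l$ acting on $\mathcal{H}_A \otimes \mathcal{H}_B = \mathbb{C}^d \otimes \mathbb{C}^2$, so that $\mathcal{F}_3(\rho) = |\mathrm{Tr}(\rho C)|/\sqrt{3}$. Because $C$ is Hermitian, one has $\mathcal{F}_3(\rho) \leq \|C\|_{\mathrm{op}}/\sqrt{3}$, and the claim $\mathcal{F}_3(\rho) \leq \sqrt{3}$ reduces to the operator-norm estimate $\|C\|_{\mathrm{op}} \leq 3$, equivalently $C^2 \leq 9\,\mathbb{I}$.

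Next I would expand $C^2$ using $A_l^2 = \mathbb{I}$ together with the Pauli identity $\sigma_l \sigma_m = \delta_{lm}\mathbb{I} + i\epsilon_{lmn}\sigma_n$. The diagonal terms collect to $3\mathbb{I}$ and the off-diagonal terms pair into commutators, giving
\begin{equation}
C^2 = 3\mathbb{I} + i\sum_{(l,m,n)\text{ cyclic}}[A_l, A_m] \otimes \sigma_n .
\end{equation}
Since each $A_l$ is a Hermitian involution, $A_l A_m$ is unitary, so $i[A_l, A_m]$ is Hermitian with $\|i[A_l, A_m]\|_{\mathrm{op}} \leq 2$, while $\|\sigma_n\|_{\mathrm{op}} = 1$. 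The operator triangle inequality then gives $\|C^2\|_{\mathrm{op}} \leq 3 + 3\cdot 2 = 9$, so that $\|C\|_{\mathrm{op}} \leq 3$ and therefore $\mathcal{F}_3(\rho) \leq \sqrt{3}$.

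For the tightness characterization, saturation of the operator triangle inequality forces both $\|i[A_l, A_m]\|_{\mathrm{op}} = 2$ for every pair and the three summands to share a common top eigenvector with aligned eigenvalues. A brief spectral computation shows that, writing $U_{lm} := A_l A_m$, one has $[A_l, A_m] = U_{lm} - U_{lm}^{\dagger}$, whose operator norm equals $2$ exactly when $U_{lm}$ has an eigenvalue $\pm i$; the strongest form of this condition, $U_{lm}^2 = -\mathbb{I}$, is equivalent to $\{A_l, A_m\} = 0$. Mutual anticommutation of the $A_l$ is therefore necessary on the support of any optimizing state. Conversely, when the $A_l$ mutually anticommute, they generate a $*$-representation of the Clifford algebra $\mathrm{Cl}(3,0)$, whose irreducible representations are two-dimensional and are realized (up to signs) by the Pauli matrices. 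Decomposing $\mathcal{H}_A$ into isotypic components and restricting to any single two-dimensional Clifford block reduces the problem to a two-qubit subsystem, on which the Bell state $|\phi^+\rangle$ (together with the protocol's sign choice $A_2 = -\sigma_y$) attains $\langle C \rangle = 3$ and saturates $\mathcal{F}_3 = \sqrt{3}$.

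The main technical obstacle will be lifting the local spectral statement ``$U_{lm}$ has an eigenvalue $\pm i$ somewhere'' to a genuinely global anticommutation relation: one has to argue that, after restricting to the minimal Clifford-invariant subspace that supports the optimizing state, the effective observables pairwise anticommute exactly, and only then may one invoke the Clifford-algebra decomposition to conclude that the extremal value is realized on an effective two-qubit factor.
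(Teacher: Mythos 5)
Your proposal is correct and follows essentially the same route as the paper: square the operator $\sum_l A_l\otimes\sigma_l$, use $A_l^2=\mathbb{I}$ and the Pauli algebra to reduce it to $3\,\mathbb{I}$ plus commutator terms of norm at most $2$, conclude $\|\mathcal{B}_{\mathrm{CJWR}}\|_\infty\le 3$, and invoke the two-dimensional irreducible representation of the Clifford algebra $\mathrm{Cl}_3(\mathbb{R})$ to restrict to an effective two-qubit subspace. If anything you are more careful than the paper on the ``only if'' direction of the tightness claim (the passage from $\|[A_l,A_m]\|_\infty=2$ to genuine anticommutation on the support of the optimizing state), which the paper asserts without proof.
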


\begin{proof}
Define the CJWR operator:
\begin{align*}
\mathcal{B}_{\mathrm{CJWR}} := \sum_{l=1}^3 A_l \otimes \sigma_l.    
\end{align*}
Expanding the square of the CJWR operator,
\begin{align*}
\mathcal{B}_{\mathrm{CJWR}}^2 &= \left( \sum_{l} A_l \otimes \sigma_l \right)^2 \nonumber \\
&= 3\, \mathbb{I} \otimes \mathbb{I} + \sum_{l < m} [A_l, A_m] \otimes \sigma_l \sigma_m,
\end{align*}
where we used \( A_l^2 = \mathbb{I},  \sigma_l^2 = \mathbb{I} \), and the anticommutation relations \( \{ \sigma_l, \sigma_m \} = 0 \) for \( l \neq m \), which implies \( \sigma_m \sigma_l = -\sigma_l \sigma_m \).

If the observables mutually anticommute, i.e., \( \{A_l, A_m\} = 0 \) for \( l \neq m \), then \( [A_l, A_m] = 2 A_l A_m \). Hence,
\begin{align*}
    \mathcal{B}_{\mathrm{CJWR}}^2 = 3\, \mathbb{I} \otimes \mathbb{I} + 2 \sum_{l < m} A_l A_m \otimes \sigma_l \sigma_m.
\end{align*}
each term \( A_l A_m \otimes \sigma_l \sigma_m \) has operator norm at most 1, since all factors are unitary. 

We use the operator norm (or spectral norm) to bound the CJWR steering operator. For a Hermitian operator \( O \), the operator norm is defined as
\begin{align*}
    \| O \|_\infty = \sup_{\| \psi \| = 1} | \langle \psi | O | \psi \rangle |,
\end{align*}
which corresponds to the largest eigenvalue in magnitude of \( O \). Using this, we obtain the following bound on the CJWR operator:
\begin{align*}
\| \mathcal{B}_{\mathrm{CJWR}}^2 \|_\infty \leq 3 + 6 = 9, \quad \Rightarrow \quad \| \mathcal{B}_{\mathrm{CJWR}} \|_\infty \leq 3.
\end{align*}

An explicit example achieving the maximum is given by choosing \( A_l = \sigma_l \) and taking \( \rho \) as the maximally entangled singlet state, for which \( \mathcal{F}_3(\rho) = \sqrt{3} \)~\cite{CavalcantiExptCriteriaSteering2009, CostaQuantificationSteering2016}.

Our proof strategy mirrors the dimensionality reduction argument employed in the CHSH scenario~\cite{Masanes2006}, where any two dichotomic observables with eigenvalues \( \pm 1 \) are shown to generate a two-dimensional invariant subspace. In our case, the set of three mutually anticommuting observables \( A_1, A_2, A_3 \) similarly generate a representation of the real Clifford algebra \( \mathrm{Cl}_3(\mathbb{R}) \), whose minimal irreducible representation is two-dimensional~\cite{Lounesto2001}. This justifies the restriction to a two-qubit system without loss of generality. Hence, there exists a subspace \( \mathcal{H}_2 \subseteq \mathcal{H}_A \) such that \( A_l \) act as \( \sigma_l \) on \( \mathcal{H}_2 \cong \mathbb{C}^2 \).

Define the projected state \(\rho_{\mathrm{eff}} := (P \otimes \mathbb{I}) \rho (P \otimes \mathbb{I}) \in \mathbb{C}^2 \otimes \mathbb{C}^2,\)
where \( P: \mathcal{H}_A \to \mathcal{H}_2 \) is the projection. Since the CJWR operator acts trivially outside this subspace,
\begin{align*}
\mathcal{F}_3(\rho) = \mathcal{F}_3(\rho_{\mathrm{eff}}).    
\end{align*}

This completes the proof that the optimal value of the CJWR expression is achieved within a two-qubit subspace, and any higher-dimensional scenario does not offer an advantage.

\renewcommand{\qedsymbol}{}
\end{proof}

\textit{Step 2:} In the previous step, we argued that, without loss of generality, Eve can restrict herself to preparing a bipartite state in \( \mathbb{C}^2 \otimes \mathbb{C}^2 \), where Alice’s measurements are fixed to be qubit Pauli observables. We now investigate which class of \( \mathbb{C}^2 \otimes \mathbb{C}^2 \) states enables Eve to extract the maximum possible information while maintaining an optimal value of the CJWR expression between Alice and Bob. In the following lemma, we show that any such two-qubit state \( \rho \in \mathbb{C}^2 \otimes \mathbb{C}^2 \) can be reduced to a Bell-diagonal form without affecting the CJWR value.
  
\begin{lemma}[Reduction to Bell-diagonal form]
\label{lemma:bell_diagonal}
Let \(\rho \in \mathbb{C}^2 \otimes \mathbb{C}^2\) be a two-qubit state shared between Alice and Bob, and suppose Bob performs fixed Pauli measurements \( \sigma_x, \sigma_y, \sigma_z \). Then there exists a Bell-diagonal state \( \rho_{\Lambda} \) such that the CJWR steering expression
\[
\mathcal{F}_3(\rho) = \frac{1}{\sqrt{3}} \left| \sum_{i=1}^{3} \langle A_i \otimes \sigma_i \rangle_\rho \right|
\]
remains unchanged:
\[
\mathcal{F}_3(\rho) = \mathcal{F}_3(\rho_{\Lambda}).
\]
moreover, \( \rho_{\Lambda} \) can be obtained from \( \rho \) by applying a symmetrization under conjugation by \( \sigma_y \otimes \sigma_y \) followed by taking the real part.
\end{lemma}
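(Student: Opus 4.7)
The plan is to use the Bloch--Fano decomposition
\[
\rho \;=\; \tfrac{1}{4}\Bigl(\mathbb{I}\!\otimes\!\mathbb{I} + \vec{a}\!\cdot\!\vec{\sigma}\!\otimes\!\mathbb{I} + \mathbb{I}\!\otimes\!\vec{b}\!\cdot\!\vec{\sigma} + \sum_{i,j}T_{ij}\,\sigma_i\!\otimes\!\sigma_j\Bigr)
\]
and to push each Pauli term through the two prescribed operations, exploiting the fact that by the preceding lemma Alice's observables may be taken to be $A_i=\sigma_i$, so that $\mathcal{F}_3(\rho)=\tfrac{1}{\sqrt{3}}|T_{xx}+T_{yy}+T_{zz}|$ depends on $\rho$ only through the three diagonal correlators.

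First, I would form $\rho_{S}:=\tfrac{1}{2}\bigl[\rho+(\sigma_y\!\otimes\!\sigma_y)\,\rho\,(\sigma_y\!\otimes\!\sigma_y)\bigr]$. Since $\sigma_y X \sigma_y=-X$ for $X\in\{\sigma_x,\sigma_z\}$ while $\sigma_y X \sigma_y=+X$ for $X\in\{\mathbb{I},\sigma_y\}$, each operator $X_A\!\otimes\!X_B$ picks up a sign that is $-1$ precisely when exactly one of its two factors lies in $\{\sigma_x,\sigma_z\}$. Averaging therefore annihilates $a_x,a_z,b_x,b_z$ together with the off-diagonal entries $T_{xy},T_{yx},T_{yz},T_{zy}$, while $T_{xx},T_{yy},T_{zz}$ survive intact, so $\mathcal{F}_3(\rho_S)=\mathcal{F}_3(\rho)$. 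Next I would form $\rho_{\Lambda}:=\tfrac{1}{2}(\rho_S+\rho_S^{*})$ via complex conjugation in the computational basis. Using $\sigma_x^{*}=\sigma_x,\ \sigma_z^{*}=\sigma_z,\ \sigma_y^{*}=-\sigma_y$, each Pauli string is multiplied by $(-1)^{N_y}$, where $N_y$ counts its $\sigma_y$ factors, so averaging kills every string with odd $N_y$. Combined with the first step this also removes $a_y,b_y$, yielding
\[
\rho_{\Lambda}=\tfrac{1}{4}\Bigl(\mathbb{I}\!\otimes\!\mathbb{I}+T_{yy}\,\sigma_y\!\otimes\!\sigma_y+\!\!\!\sum_{i,j\in\{x,z\}}\!\!\!T_{ij}\,\sigma_i\!\otimes\!\sigma_j\Bigr),
\]
with vanishing local Bloch vectors and with $T_{xx},T_{yy},T_{zz}$ preserved. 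Positivity is automatic: the first step is a unitary average, and for Hermitian $\rho_S$ one has $\rho_S^{*}=\rho_S^{T}$ with the same spectrum as $\rho_S$, so the second step is also a convex combination of density operators.

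To reach strict Bell-diagonal form it remains only to remove any residual $T_{xz},T_{zx}$. I would absorb an Alice-side rotation $U_A\!\otimes\!\mathbb{I}$ in the Bloch $xz$-plane that diagonalises the $2\!\times\!2$ block; because Alice's device is a black box in the 1sDI setting, such a $U_A$ can be folded into her undisclosed observables---equivalently, one redefines $A_i \mapsto U_A^{\dagger}\sigma_i U_A$---without altering Bob's marginal or Eve's accessible information about Bob's outcomes. Recomputed with these adjusted observables, the CJWR value is unchanged by construction, yielding a genuine Bell-diagonal state with the same steering value as $\rho$.

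The principal delicate point, and where I expect the proof to need most care, is the real-part step: complex conjugation is antiunitary, so $\rho\mapsto\tfrac{1}{2}(\rho+\rho^{*})$ is not a CPTP map, and one cannot simply invoke data-processing to conclude that Eve's Holevo information $\chi(B_3:E)$ on $\rho_\Lambda$ dominates her information on $\rho$. Making the reduction to Bell-diagonal states a genuine worst-case statement for the key-rate bound will therefore require a purification-level argument---for instance, showing that Eve's optimal attack already inherits the $\sigma_y\!\otimes\!\sigma_y$ conjugation symmetry and the time-reversal symmetry being imposed on the AB marginal, so that no information is lost in the passage to $\rho_\Lambda$. Modulo this subtlety, the rest of the argument is a routine sign-and-parity bookkeeping over the sixteen Pauli strings of a two-qubit system.
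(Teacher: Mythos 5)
Your proposal follows essentially the same route as the paper's proof: average over conjugation by $\sigma_y\otimes\sigma_y$, then take the real part, then finish with a local rotation in the $xz$-plane; your explicit Pauli-string bookkeeping is just a more careful rendering of what the paper states in one line. In fact you correctly identify two points the paper glosses over: (i) after the two averaging steps the correlation matrix can still have nonzero $T_{xz},T_{zx}$ (equivalently, coherences between $\ket{\Phi^+}$ and $\ket{\Psi^-}$, which share the same $\sigma_y\otimes\sigma_y$ eigenvalue and are real), so the final rotation is genuinely needed for diagonalization and not merely for ordering eigenvalues; and (ii) the real-part map is antiunitary-based and not CPTP, so promoting the reduction to a worst-case statement about Eve's Holevo quantity requires a separate purification-level argument (handled in Pironio \emph{et al.}, but silently assumed here). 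One caution on your final step: a rotation on Alice's side alone sends the $2\times 2$ block $M$ to $O_AM$, and a generic real $M$ cannot be diagonalized this way (it requires the full singular value decomposition $O_AMO_B^{T}$, i.e.\ rotations on both sides); since Bob's measurements are fixed, the Bob-side rotation needs its own justification, e.g.\ via the invariance of the optimized CJWR value $\sqrt{\sum_i\lambda_i^2}$ under local rotations. The paper's own proof is no more careful on this point, so this is a shared looseness rather than a defect unique to your argument.
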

\begin{proof}
The CJWR functional depends only on the two-point correlators \( \langle A_i \otimes \sigma_i \rangle \), and not on local marginals or off-diagonal coherences in other Bell-state bases. Consider the symmetrized state:
\[
\bar{\rho} = \frac{1}{2} \left[ \rho + (\sigma_y \otimes \sigma_y)\rho(\sigma_y \otimes \sigma_y) \right].
\]
This operation preserves all correlators of the form \( \langle A_i \otimes \sigma_i \rangle \), since the Pauli matrices \(\sigma_i\) are either invariant or change sign under conjugation by \(\sigma_y\), and \(A_i\) can be redefined accordingly. As a result,
\[
\mathcal{F}_3(\rho) = \mathcal{F}_3(\bar{\rho}).
\]
Moreover, this conjugation eliminates off-diagonal elements connecting Bell states with opposite \(\sigma_y \otimes \sigma_y\) eigenvalues.

To remove the remaining imaginary parts of the off-diagonal terms, we take the real part:
\begin{align}
\rho_\Lambda & = \frac{1}{2} (\bar{\rho} + \bar{\rho}^*)\nonumber \\
& = \begin{pmatrix}
\Lambda_{\Phi^+} & & & \\
& \Lambda_{\Psi^-} & & \\
& & \Lambda_{\Phi^-} & \\
& & & \Lambda_{\Psi^+}
\end{pmatrix} 
\label{BellDiagonal}
\end{align}
yielding a real, symmetric state diagonal in the Bell basis (i.e. \(\{\ket{\Phi^+}, \ket{\Psi^-}, \ket{\Phi^-}, \ket{\Psi^-} \}\)). Since both steps preserve the relevant correlators, we have:
\[
\mathcal{F}_3(\rho) = \mathcal{F}_3(\rho_{\Lambda}),
\]
and thus it suffices to restrict the security analysis to Bell-diagonal states.

Following the symmetrization, the state $\rho$ can be locally rotated within the $(x,z)$ plane to arrange the Bell-state eigenvalues in a fixed order \cite{Pironio2009}, such as $\Lambda_{\Phi^+} \geq \Lambda_{\Psi^-}$ and $\Lambda_{\Phi^-} \geq \Lambda_{\Psi^+}$, without affecting the CJWR functional. These rotations are unitary operations that preserve two-qubit correlators $\langle A_i \otimes \sigma_i \rangle$, and thus leave $\mathcal{F}_3(\rho)$ invariant.

\renewcommand{\qedsymbol}{}
\end{proof}

\textit{Step 3:} Without loss of generality, Eve can send any mixture of Bell-diagonal states like \(\rho_{AB} = \sum_{\Lambda} p_{\Lambda}\rho_{\Lambda}\), where \(\Lambda\) is a classical ancilla known to her. Now, we need to calculate the Holevo bound \(\chi_{\Lambda}(B_3 : E)\) for that Bell-diagonal state. 
For the Bell diagonal state the Holevo bound \(\chi_{\Lambda}(B_3:E)\) can be calculated as, 
\begin{equation}
\begin{aligned}
    \chi_{\Lambda}(B_3 : E) &= S(\rho_E) - \sum_{b_3 = \pm 1} p(b_3)\, S(\rho_{E|b_3}) \\
    &= -\sum_{i=1}^4 \Lambda_i \log_2 \Lambda_i - \frac{1}{2} \left( S(\rho_{E|+1}) + S(\rho_{E|-1}) \right),
\end{aligned}
\end{equation}
here \(\Lambda_1 = \Lambda_{\Phi^+}, \Lambda_2 = \Lambda_{\Psi^-}, \Lambda_3 = \Lambda_{\Phi^-}, \Lambda_4 = \Lambda_{\Psi^+}\).

To determine a secure key rate in a 1sDI-QKD, we use Bob's \(\sigma_z\) measurement to assess Eve's accessible information, \(\chi_{\Lambda}(B_3: E)\). This decision is motivated by the necessity for device-independent security proofs to provide a strong, worst-case bound on Eve's knowledge that is valid for all attacks compatible with the observed steering or CJWR violation. The upper bound on \(\chi_{\Lambda}(B_3 : E)\) is given by,
\begin{equation}
\label{ChiUpper}
    \chi_{\Lambda}(B_3 : E) \leq -\sum_{i=1}^4 \Lambda_i \log_2 \Lambda_i - h(\Lambda_{1} + \Lambda_3).
\end{equation}
Here \(h(x) = -x \log_2 x - (1 - x) \log_2 (1 - x),\) \;\;h(x) is the binary entropy. The upper bound in Eq.~(\ref{ChiUpper}) is tight for Bell-diagonal states when Bob measures \(\sigma_z\), which is the case where Eve gains maximal information (see Lemma 5 of Ref.~\cite{Pironio2009}). Although Bob’s trusted apparatus allows flexibility in choosing an optimal measurement basis for key generation (e.g., \(\vec{\sigma}\cdot\hat{n}\)), analyzing \(\sigma_z\) ensures a conservative, analytically tractable security proof that avoids reliance on numerical optimization and guarantees robustness across implementations. This approach, standard in DI and 1sDI-QKD, strengthens the security analysis by addressing the worst-case scenario, ensuring the key remains secure as long as the observed Bell violation exceeds the classical threshold.

We use the entropic inequality given in Lemma 6 of Ref.~\cite{Pironio2009} to upper-bound Eve's accessible information. It states that for a Bell-diagonal state with eigenvalues \(\Lambda_{1}, \Lambda_{2}, \Lambda_{3}, \Lambda_{4}\) (all \(\geq 0\) and summing to 1), \( \mathcal{R}^2 = (\Lambda_{1} - \Lambda_{2})^2 + (\Lambda_{3} - \Lambda_{4})^2,\) and taking into account \(S(\Lambda) =-\sum_{i=1}^4 \Lambda_i \log_2 \Lambda_i - h(\Lambda_{1} + \Lambda_{3})\),  the following inequality holds:

\begin{align}
    \mathcal{S}(\Lambda) 
\leq h\left( \frac{1 + \sqrt{2\mathcal{R}^2 - 1}}{2} \right)
\quad \text{if } \mathcal{R}^2 > \frac{1}{2},
\end{align}
\begin{align}
\mathcal{S}(\Lambda) \leq 1 \quad \text{if } \mathcal{R}^2 \leq \frac{1}{2},
\end{align}
with equality in the first bound if and only if either \( \Lambda_{1,3} = 0 \) or \( \Lambda_{2,4} = 0 \)~\cite{Acin2007}.

\textit{Step 4:} We now relate Eve's accessible information to the CJWR function \( \mathcal{F}_3(\rho) \). For a Bell-diagonal state \( \rho_{\Lambda} \), the correlation matrix \( T^{\Lambda} \) is diagonal in the Pauli basis, with entries given by
\(
T^{\Lambda}_{11} = \Lambda_1 - \Lambda_2 - \Lambda_3 + \Lambda_4,\quad
T^{\Lambda}_{22} = -\Lambda_1 - \Lambda_2 + \Lambda_3 + \Lambda_4,\quad
T^{\Lambda}_{33} = \Lambda_1 - \Lambda_2 + \Lambda_3 - \Lambda_4.
\)
The optimal CJWR value for such a state is
\begin{align}
    \mathcal{F}_3^{\Lambda} = \sqrt{(T^{\Lambda}_{11})^2 + (T^{\Lambda}_{22})^2 + (T^{\Lambda}_{33})^2}.
    \label{F_3L}
\end{align}

For the optimal collective attack, the entropic bound of Ref.~\cite{Acin2007}, as mentioned in step~3,
is saturated by extremal Bell-diagonal states satisfying either
$\Lambda_1=\Lambda_3=0$ or $\Lambda_2=\Lambda_4=0$.
Without loss of generality, we consider the first case and write
\begin{align*}
  \Lambda_1 &= \Lambda_3 = 0, \qquad
  \Lambda_2 = k, \qquad
  \Lambda_4 = 1-k ,
\end{align*}
with $k\in[0,1]$.
Then the parameter $R$ entering the entropic bound is then
\begin{align*}
  R^2 &= (\Lambda_1-\Lambda_2)^2 + (\Lambda_3-\Lambda_4)^2   \notag\\
      &= k^2 + (1-k)^2 .
\end{align*}
Using the correlation matrix elements of a Bell-diagonal state, one finds
\begin{align*}
  T_{11}^{\Lambda} &= 1-2k, \qquad
  T_{22}^{\Lambda} = 1-2k, \qquad
  T_{33}^{\Lambda} = -1 ,
\end{align*}
so that the CJWR steering value of Eq.~(\ref{F_3L}) becomes
\begin{align*}
  \mathcal (F_3^{\Lambda})^2
  &= 2(1-2k)^2 + 1 \notag \\
   &= 8k^2 - 8k + 3 \notag \\
   &= 4(k^2+(1-k)^2) -1.
\end{align*}
On the other hand,
\begin{align*}
  4R^2 - 1
  &= 4\big[k^2+(1-k)^2\big] - 1      \notag\\
  &= 8k^2 - 8k + 3 ,
\end{align*}
which yields
\begin{equation*}
  \mathcal (F_3^{\Lambda})^2 = 4R^2 - 1 .
\end{equation*}
The complementary extremal case $\Lambda_2=\Lambda_4=0$ can be treated
analogously and leads to the same relation.
Hence,
\begin{equation*}
  R = \frac{1}{2}\sqrt{1+\mathcal (F_3^{\Lambda})^2},
\end{equation*}
demonstrating that the intermediate parameter $R$ is fully determined by the observed CJWR steering violation $\mathcal F_3^{\Lambda}$.

This can be compactly written as \( \mathcal{F}_3^{\Lambda} = \sqrt{4\mathcal{R}^2 - 1} \), where \( \mathcal{R} \) is defined in terms of the Bell-state probabilities and the threshold \( \mathcal{R} = 1/\sqrt{2} \) corresponds to the classical limit for the CJWR inequality, analogous to the CHSH case in Ref.~\cite{Pironio2009}.

Since Eve's most general collective strategy can involve preparing a mixture of Bell-diagonal states, we consider \( \rho_{AB} = \sum_{\Lambda} p_{\Lambda} \rho_{\Lambda} \). In this case, her accessible information is given by
\begin{equation}
    \chi(B_3 : E) = \sum_{\Lambda} p_{\Lambda} \chi_{\Lambda}(B_3 : E).
\end{equation}
using the entropic bound from Step 3, \( \chi_{\Lambda}(B_3 : E) \leq \mathcal{S}(\mathcal{F}_3^{\Lambda}) \), and the fact that \( \mathcal{S}(\cdot) \) is concave, we obtain
\begin{equation*}
    \chi(B_3 : E) \leq \sum_{\Lambda} p_{\Lambda} \mathcal{S}(\mathcal{F}_3^{\Lambda}) \leq \mathcal{S}\left( \sum_{\Lambda} p_{\Lambda} \mathcal{F}_3^{\Lambda} \right).
\end{equation*}
Moreover, by convexity of the CJWR expression (see Eq.~\eqref{CJWRineq}) and the triangle inequality (\( \left| a + b \right| \leq \left| a \right| + \left| b \right|
\)), it holds that
\(
\mathcal{F}_3(\rho_{AB}) \leq \sum_{\Lambda} p_{\Lambda} \mathcal{F}_3(\rho_{\Lambda}),
\)
which implies
\begin{equation*}
    \chi(B_3 : E) \leq \mathcal{S}\left( \mathcal{F}_3(\rho_{AB}) \right).
\end{equation*}

Assuming uniform marginals for Alice and Bob, the mutual information becomes \( I(A_3 : B_3) = 1 - h(Q) \), where \( Q \) is QBER. Combining all steps, the key rate under optimal collective attacks in the 1sDI scenario (see Eq.~\eqref{1sDIKeyrate}) is lower bounded by
\begin{equation}
    \label{Opt1sDIKeyrate}
    r^{1s\mathrm{DI}} \geq 1 - h(Q) - h\left( \frac{1 + \sqrt{(\mathcal{F}_3^2 - 1)/2}}{2} \right).
\end{equation}

\section{Noise tolerance of CJWR-based 1sDI-QKD}\label{robustness}
To quantitatively compare the noise tolerance of 1sDI-QKD against DI and DD protocols, we consider a widely used noise model in the QKD literature~\cite{Scarani2009,
PirandolaQKDReview2020, PortmannQKDReview2022, RennerQKDReview2023, Acin2007,Pironio2009,Branciard2012}. In this model, the maximally entangled Bell state \(\ket{\phi^+}\) is subjected to depolarizing noise, resulting in the mixed state
\begin{equation}
\rho_\nu = \nu \ket{\phi^+}\bra{\phi^+} + (1 - \nu)\frac{\mathbb{I}}{4},
\label{rho_werner}
\end{equation}
where \(\nu \in [0,1]\) denotes the visibility, quantifying the strength of the noise. To evaluate the secret key rate achievable in the 1sDI-QKD scenario, we employ Eq.~(\ref{Opt1sDIKeyrate}) and compute the relevant quantities from this state. The QBER is given by
\(Q = \frac{1 - \nu}{2}\), as obtained from Eq.~(\ref{QBER}), while the CJWR correlator evaluates to
\(\mathcal{F}_3 = \nu \sqrt{3}.\) We can rewrite this \((Q, \mathcal{F}_3)\) as a correlation,
\begin{align}
    \mathcal{F}_3 = \sqrt{3} (1 - 2 Q).
    \label{WernerCorr}
\end{align}
This relation in Eq.~(\ref{WernerCorr}) is independent of any assumptions on the source or Alice’s measurement device, relying solely on the observed statistics \(Q, \mathcal{F}_3\).  

For the DI and DD scenarios, we use the corresponding key rate expressions derived in previous works. In the DI case, the key rate is bounded using the observed violation of a Bell inequality, typically the CHSH inequality, following the approach of Ref.~\cite{Acin2007}. The relevant key rate expression is a function of the CHSH parameter \(\mathcal{B}\) and QBER \(Q\), which, under depolarizing noise, takes the form \(\mathcal{B} = 2\sqrt{2} \nu\) and \(Q = \frac{1-\nu}{2}\). In the CHSH scenario,    the correlation is given by \(\mathcal{B} = 2\sqrt{2}(1 - 2 Q)\)\cite{Acin2007}. 
In contrast, the key rate for the device-dependent (DD) scenario~\cite{Acin2007} is computed under the assumption of full trust in both the state preparation and measurement devices. In this setting, the Devetak--Winter formula~\cite{Devetak2005} applies directly, with the secret key rate determined by the mutual information between Alice and Bob and the conditional entropy of Eve.

\begin{figure}[h!]
    \centering
    \includegraphics[width=\linewidth]{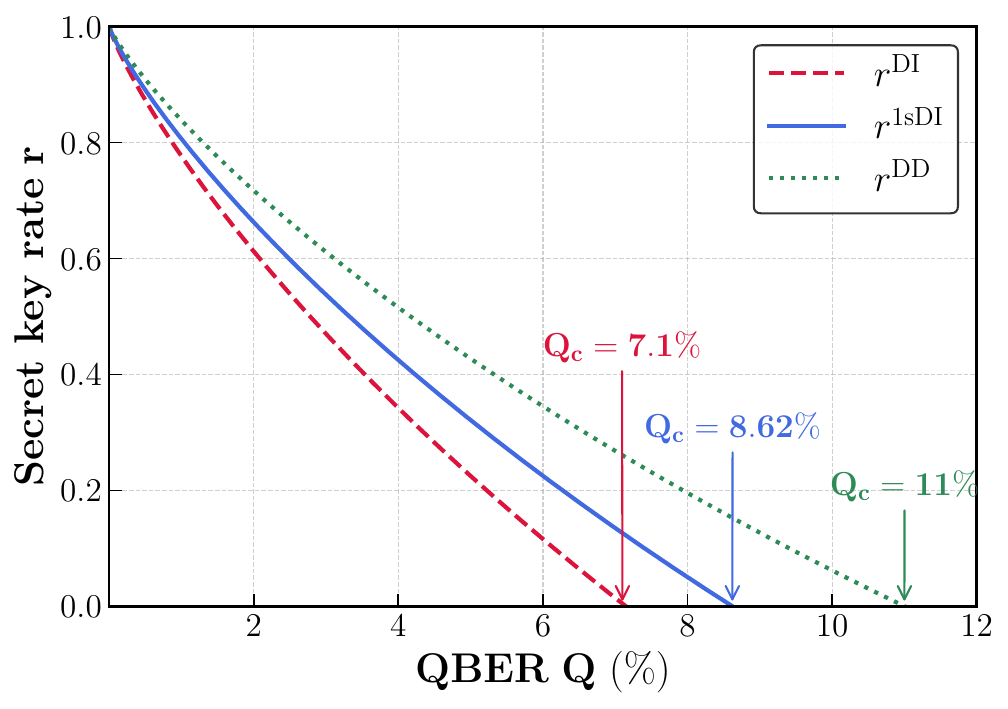}
    \caption{\footnotesize Comparison of key rates (r) as a function of the QBER (Q). The red dashed line represents the key rate \( r^{\mathrm{DI}} \) in the DI scenario based on Bell inequality violation. The blue solid line corresponds to the 1sDI key rate \( r^{\mathrm{1sDI}} \) certified via CJWR steering inequality violation. The green dotted line shows the DD key rate \( r^{\mathrm{DD}} \), where both parties' devices are trusted. }
    \label{Key rate robustness}
\end{figure}

To enable a consistent and transparent comparison across different security models, we evaluate the key rates for the DD, 1sDI, and DI scenarios using a common depolarizing noise model \(\rho_\nu\), parameterized by the visibility \(\nu\). The resulting key rates, plotted as functions of the QBER in Fig.~\ref{Key rate robustness}, reveal distinct noise thresholds for each protocol. For the fully DI-QKD protocol, the critical QBER is \(Q_c^{\mathrm{DI}} = 7.1\%\)~\cite{Acin2007, Pironio2009, Scarani2009}, while in the DD scenario it increases to approximately \(11\%\)~\cite{Shor2000, Acin2007, Pironio2009}. Our CJWR-based 1sDI-QKD protocol achieves a critical QBER of \(Q_c^{\mathrm{1sDI}} = 8.62\%\), which lies between these two regimes:
\begin{equation}
Q_c^{\mathrm{DI}} < Q_c^{\mathrm{1sDI}} < Q_c^{\mathrm{DD}}.
\end{equation}

This intermediate robustness highlights the advantage of the 1sDI setting, which tolerates more noise than fully device-independent protocols while considering fewer assumptions than fully device-dependent approaches. Furthermore, our protocol compares favorably with other steering- or nonlocality-based schemes: for instance, a DI-QKD protocol based on three-setting Bell inequalities yields a threshold of \(Q_c = 7.5\%\)~\cite{Masanes2011}, while one employing an asymmetric Bell inequality reports \(Q_c = 8.34\%\)~\cite{Woodhead2021}.

\begin{table*}[t]
\centering
\small
\renewcommand{\arraystretch}{1.2}
\begin{tabular}{p{2.7cm} p{2.6cm} p{2.4cm} p{4.3cm} p{4.0cm}}
\hline\hline
\textbf{Reference}
& \textbf{Protocol type}
& \textbf{Attack model}
& \textbf{Role of steering}
& \textbf{Key features} \\
\hline

Branciard et al.~\cite{Branciard2012}
& Prepare-measure (BBM92-type), 1sDI
& Coherent (memoryless)
& Implicit: steering required for security when the protocol is expressed in an entanglement-based picture
& Eve’s information bounded via QBERs in two bases using entropic uncertainty relations; reduces to standard DD key rate under symmetric noise \\

Tomamichel et al.~\cite{Tomamichel1sDIQKD2013}
& Prepare-measure (BB84-type), 1sDI
& Coherent, finite-size
& Implicit: steering arises through monogamy-of-entanglement games
& Fully composable finite-size security; very low key rates, positive only at very low QBER \\

Pramanik et al.~\cite{Pramanik2014}
& Entanglement-based, 1sDI
& Individual attacks
& Explicit: violation of a steering inequality guarantees security
& Provides a lower bound on the key rate from steering violation; no explicit Holevo-information bound \\

Mukherjee et al.~\cite{Mukherjee2023SteeringQKD}
& Entanglement-based, 1sDI
& Asymptotic
& Explicit: CJWR steering violation combined with local filtering
& Security is characterized via an effective QBER after filtering; Eve’s information not bounded directly \\

Masini \& Sarkar~\cite{masini20241sDIQKD}
& Prepare--measure (BB84-type), 1sDI
& Coherent, finite-size
& Implicit: steering used to bound Eve’s entropy via SDP and EAT
& Fully composable; numerical bounds only, no closed-form analytic dependence on QBER or steering \\

\textbf{This work}
& Entanglement-based (E91-type), 1sDI
& Collective
& Explicit: security certified directly by CJWR steering violation
& The analytic Holevo bound expressed in terms of the observed pair $(Q,\mathcal F_3)$; model-independent detection-efficiency thresholds \\

\hline\hline
\end{tabular}
\caption{Comparison of representative 1sDI-QKD protocols, highlighting protocol type, attack model, and the operational role of steering.}
\label{Comparison_Table}
\end{table*}

Our protocol differs conceptually from most existing 1sDI-QKD schemes summarized in Table~\ref{Comparison_Table}. Whereas earlier approaches ultimately bound Eve’s information using one or more QBER parameters, our analysis ties security directly to the observed strength of steering correlations rather than relying on error rates alone. In particular, in the protocol of Branciard \emph{et al.}~\cite{Branciard2012}, Eve’s information is bounded via an entropic uncertainty relation using QBERs measured in two complementary bases; under ideal detection efficiency and symmetric depolarizing noise, these QBERs coincide, causing the 1sDI key-rate bound to reduce to the standard device-dependent BB84/BBM92 expression~\cite{Shor2000, Pironio2009, Scarani2009}
\(r = 1 - 2h(Q),\) with a critical QBER $Q_c \simeq 11\%$. A similar QBER-centric behavior appears in the finite-size, monogamy-game-based analysis of Tomamichel \emph{et al.}~\cite{Tomamichel1sDIQKD2013}, where a positive key can be obtained only for very low depolarizing noise, $Q \lesssim 1.5\%$. Steering-based QKD protocols that do not explicitly employ a Holevo-information bound, notably those of Pramanik \emph{et al.}~\cite{Pramanik2014} (restricted to individual attacks) and Mukherjee \emph{et al.}~\cite{Mukherjee2023SteeringQKD} (with no adversarial model specified), certify security through the violation of a steering inequality. In these works, the condition for a positive key rate can be re-expressed in terms of an effective QBER, leading to a comparatively high tolerable noise threshold, as \(Q_c \simeq 21\%\). Nevertheless, since Eve’s information is not explicitly bounded, this threshold does not correspond to a quantitative characterization of Eve’s accessible information. While Masini \emph{et al.}~\cite{masini20241sDIQKD} mainly focused on demonstrating near-optimal detection-efficiency thresholds by means of SDP-based entropy bounds and a three-outcome measurement model, their approach yields purely numerical key-rate bounds without a closed-form dependence on QBER or steering violation. In contrast, our E91-style 1sDI protocol bounds Eve’s Holevo information directly from the observed pair $(Q,\mathcal{F}_3)$, leading under the same depolarizing-noise model to a nontrivial critical QBER $Q_c \simeq 8.62\%$ derived solely from measured correlations and without assuming a noise model and measurement model.

\section{Detection Efficiency in 1sDI-QKD}\label{DetectionEfficiency}

While our previous analysis assumes ideal detection conditions, realistic implementations of QKD must account for detection inefficiencies, particularly due to the well-known detection loophole~\cite{Pironio2009, Scarani2009}, which poses a major challenge for DI-QKD protocols. In such protocols, where both parties are untrusted, achieving secure key distribution requires very high detection efficiencies. Specifically, efficiencies on the order of $92.3\%$~\cite{Pironio2009} or even $94.5\%$~\cite{Masanes2011, Branciard2012} are necessary under assumptions of ideal visibility ($\nu = 1$).

The primary reason for such stringent requirements is that Eve may exploit undetected events in either party’s device to simulate nonlocal correlations. However, in the 1sDI scenario, only one party (Alice) is untrusted, while the other (Bob) uses a trusted, fully characterized measurement device. This relaxation allows for more practical implementations with comparatively lower detection efficiency thresholds.

In the 1sDI scenario, detection inefficiency manifests through no-click events on Alice's side, which we denote by the null outcome $\varnothing$. This effectively increases Alice’s output alphabet to three possible outcomes: $\{+1, -1, \varnothing\}$. Following the approach of Ref.~\cite{Pironio2009}, we address this by deterministically mapping the null outcome to $-1$, thereby reducing the measurement to a binary-output POVM. The resulting effective measurement operators on Alice's side take the form
\begin{equation}
    \left\{ \eta_A A_{+1|i},\ \eta_A A_{-1|i} + (1 - \eta_A) \mathbb{I} \right\},
\end{equation}
where $\eta_A \in [0,1]$ denotes Alice’s detection efficiency and $A_{\pm 1|i}$ are the ideal POVM elements for input $i$.

Since Bob’s device is trusted, we do not explicitly model his inefficiency and consider only those rounds in which his detector clicks. As Eve cannot exploit losses on the trusted side, this selective treatment remains secure and operationally relevant.

There are two natural ways to incorporate the detection inefficiency into the key rate analysis. In the first, we adopt a non-post-selected strategy, retaining all rounds, including those where Alice registers a null outcome. In this case, the QBER becomes
\(Q_{\text{PS}'} = \frac{1 - \nu \eta_A}{2}\), explicitly dependent on the product of the state visibility $\nu$ and Alice's detection efficiency $\eta_A$.

In contrast, a postselection-based strategy, similar to that used in Ref.~\cite{Branciard2012}, discards all rounds in which Alice does not report a valid outcome. In this case, the QBER is independent of $\eta_A$ and takes the form
\(Q_{\text{PS}} = \frac{1 - \nu}{2}\).

It is important to emphasize that although postselection may improve the observed QBER, it cannot be used when estimating Eve’s information. Post-selection can introduce side information to Eve if she has any control or knowledge over the detection process, especially since Alice’s device is untrusted. Consequently, the bound on Eve’s Holevo information must be derived from the entire ensemble of rounds, without post-selection, to preserve composability.
\begin{figure}[h!]
    \centering
    \includegraphics[width=\linewidth]{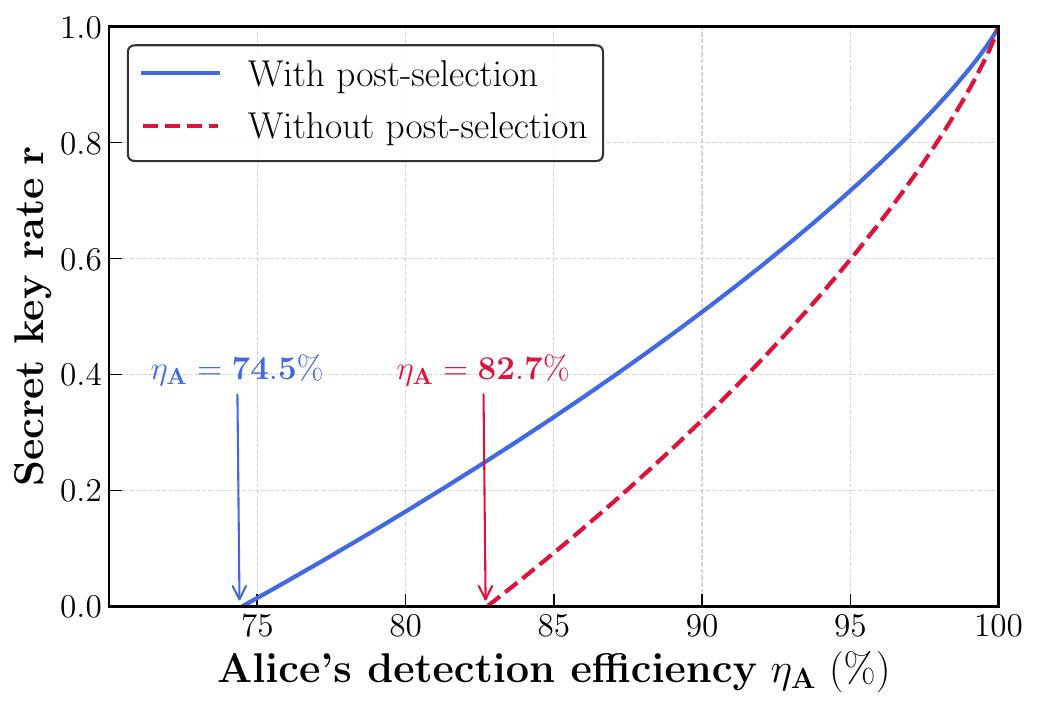}
    \caption{\footnotesize Comparison of secret key rates \( r \) as a function of Alice’s detection efficiency \( \eta_A \) under ideal visibility (\( \nu = 1 \)) for a 1sDI-QKD protocol. The red dashed curve corresponds to the key rate without post-selecting QBER (Eq.~\ref{WithoutPostselected}), while the blue solid curve represents the post-selected case (Eq.~\ref{WithPostselected}), where QBER is constant. Post-selection allows secure key generation at lower detection efficiencies, down to \( 74.5\% \), highlighting a practical advantage over fully device-independent QKD.
}
    \label{Key rate Detection eff}
\end{figure}

To model the effect of loss on the steering parameter, we adopt the null-outcome mapping described above. Under this model, the observed CJWR steering parameter becomes \(  \mathcal{F}_3 = \eta_A \nu \sqrt{3},\) indicating that detection inefficiency scales linearly with the visibility and degrades the strength of the observed steering correlations.

Using this modified expression, we can now write the corresponding key rate expressions for the two cases. When no post-selection is applied, the secure key rate becomes
\begin{equation}
    \label{WithoutPostselected}
    r^{1sDI}_{\text{PS}'} = 1 - h(Q_{\text{PS}'}) - h\left( \frac{1 + \sqrt{(\mathcal{F}_3^2 - 1)/2}}{2} \right),
\end{equation}
whereas under post-selection, the key rate is given by
\begin{equation}
    \label{WithPostselected}
    r^{1sDI}_{\text{PS}} = \eta_A (1 - h(Q_{\text{PS}})) - h\left( \frac{1 + \sqrt{(\mathcal{F}_3^2 - 1)/2}}{2} \right).
\end{equation}

These expressions provide a complete and realistic framework for evaluating the performance of our 1sDI-QKD protocol under lossy conditions. The post-selection strategy benefits from improved QBER but at the cost of reduced key throughput, while the non-post-selected version ensures data integrity at the expense of tighter efficiency requirements.

In Fig.~\ref{Key rate Detection eff}, we illustrate how the secret key rate varies with Alice’s detection efficiency $\eta_A$, assuming perfect visibility ($\nu = 1$). The red dashed curve corresponds to the case without post-selection, where a key can be generated only if $\eta_A$ exceeds $82.7\%$. The blue solid curve shows the postselected strategy, which lowers the threshold to $74.5\%$. These values are already significantly below the critical efficiencies typically required for DI-QKD, where values above $92\%$ are common~\cite{Pironio2009, Branciard2012}.

To explore how visibility impacts the security threshold, we further examine the relationship between $\nu$ and the minimum detection efficiency needed for key generation. As shown in Fig.~\ref{eta_vs_nu}, the threshold $\eta_A$ decreases with increasing visibility ($\nu$). For all values of $\nu$, the post-selected strategy performs better, consistently allowing secure key generation at lower detection efficiencies. At $\nu = 1$, we recover the earlier thresholds from Fig.~\ref{Key rate Detection eff}, confirming consistency between the two analyses.
 We emphasize that the analysis extends beyond the ideal-visibility limit. Figure~\ref{eta_vs_nu} shows that the detection-efficiency requirements become increasingly stringent as the visibility decreases in the experimentally relevant nonideal regime $\nu<1$. For example, at $\nu \approx 0.95$ the minimum detection efficiency is approximately $\eta_A \approx 78\%$ with post-selection and $\eta_A \approx 86\%$ without post-selection, increasing to $\eta_A \approx 82\%$ and $\eta_A \approx 90\%$, respectively, at $\nu \approx 0.90$. Consistently, Fig.~\ref{Key rate Detection eff} indicates that a positive secret key rate is achievable only for $\eta_A \gtrsim 74.5\%$ with post-selection and $\eta_A \gtrsim 82.7\%$ without post-selection, corresponding to the high-visibility regime. These results highlight the experimentally relevant trade-off between source visibility and detector efficiency in non-ideal operating conditions.

When compared with other known QKD protocols, our approach remains competitive. In entanglement-based protocols with a trusted Bob, critical detection efficiencies are around $89.6\%$ without postselection and $83.3\%$ with postselection~\cite{Branciard2012, Masanes2011}. In the DI setting under collective attacks, the requirements are even stricter $92.3\%$ without postselection and $88.9\%$ with it~\cite{Pironio2009}. The original one-sided DI-QKD protocol proposed by Branciard \textit{et al.} achieves lower thresholds of $78\%$ (without postselection) and $65.9\%$ (with postselection) using a two-setting BBM92-like scheme and an entropic uncertainty-based proof~\cite{Branciard2012}. Masini \emph{et al.}~\cite{masini20241sDIQKD} further reduce the required detection efficiency on the untrusted side to 
50.1\% by retaining nondetection events as an explicit third outcome and bounding Eve’s information numerically using SDP-based entropy-accumulation techniques. Although this three-outcome treatment avoids post-selection and enables security against coherent attacks, it significantly weakens the effective correlations, leading to reduced key rates near the threshold and security bounds that are purely numerical.

\begin{figure}[h!]
    \centering
    \includegraphics[width=\linewidth]{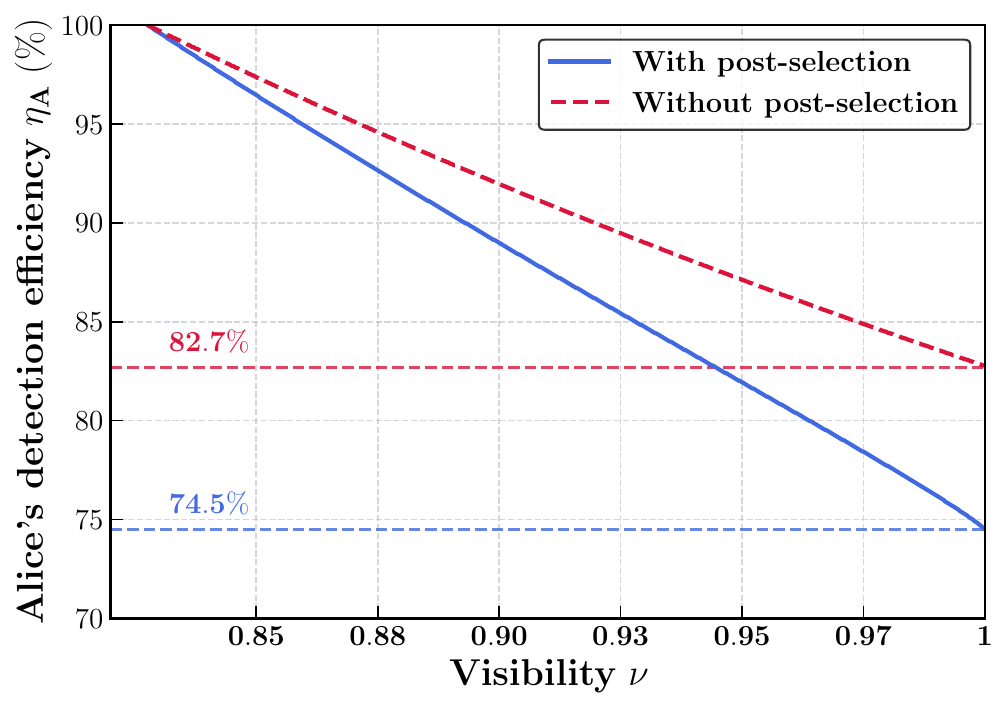}
    \caption{\footnotesize 
Threshold detection efficiency of Alice's device $\eta_A$ (in \%) required for a positive secret key rate $r^{1sDI}> 0$ as a function of the source visibility $\nu$. The non-postselected strategy is denoted by the red dashed line, while the postselection case is represented by the solid blue line. 
}
    \label{eta_vs_nu}
\end{figure}

Notably, the critical detection efficiency in steering-based protocols depends on the chosen steering inequality. While we use a fixed bound, experimental studies suggest that the threshold can be adapted based on observed efficiencies~\cite{BennetExptSteering2012}, potentially pushing the security threshold even lower.

\section{Salient Features and Outlook}\label{Salient}

We have presented a security framework for 1sDI-QKD based on the violation of the CJWR steering inequality, offering a practically motivated middle ground between DI and DD quantum cryptographic protocols. By assuming trust only in Bob’s measurement device and treating Alice’s device as completely untrusted, our approach aligns well with realistic scenarios, such as asymmetric user-server QKD architectures.

A key strength of this work is the explicit analytical characterization of the asymptotic secret key rate under optimal collective attacks. By relating the experimentally observed CJWR steering violation $\mathcal{F}_3$ to Eve’s Holevo information within an effective two-qubit Bell-diagonal description, we derive a closed-form Devetak–Winter key-rate expression that can be evaluated directly from measured statistics. In contrast to QBER-based approaches, our protocol follows an E91-style logic, analogous to device-independent QKD based on Bell-inequality violations~\cite{Acin2007}, in which Eve’s information is quantified directly through the degree of CJWR steering violation.

Furthermore, our protocol remains robust against depolarizing noise, tolerating up to $8.62\%$ QBER by relying solely on the observed correlations $(Q, \mathcal{F}_3)$.
For comparison, DI-QKD protocols using three-setting Bell inequalities tolerate up to $Q_c = 7.5\%$~\cite{Masanes2011}, while those based on asymmetric Bell inequalities reach $Q_c = 8.34\%$~\cite{Woodhead2021}. The resulting performance approaches that of fully trusted DD schemes, yet preserves significant device-independence, reinforcing the value of 1sDI-QKD in practical implementations. Our protocol also shows favourable thresholds under detection inefficiency, remaining secure with efficiencies as low as $74.5\%$ under post-selection. This compares favourably to DI-QKD thresholds, which often demand efficiencies greater than $87\%$, as demonstrated in recent photonic DI-QKD (with noisy preprocessing) implementation~\cite{LiuDI-QKDExpt}.

 While the present analysis is restricted to asymptotic security against collective attacks, such attacks capture the optimal single-round eavesdropping strategy and have historically played a central role in the development of device-independent security proofs~\cite{Acin2007}. In standard, device-dependent scenarios, collective-attack bounds can often be lifted to coherent-attack security via de Finetti arguments; however, in DI and 1sDI settings, this route is considerably more involved, as known quantum de Finetti theorems rely on assumptions such as bounded local dimension that are not available when devices are uncharacterized~\cite{ChristandlTonerDeFinetti2009}. A full extension to composable security against general coherent attacks, including finite-size effects, therefore requires alternative tools, most notably entropy accumulation techniques capable of handling general sequential and memory-dependent strategies~\cite{FriedmanEAT2018,Dupuis2020EAT}. In this context, we expect that the explicit single-round bounds on Eve’s information derived here will provide a natural and valuable building block for such future extensions beyond the collective-attack regime.

This work's analytical framework opens up a number of promising directions for the advancement of 1sDI-QKD in practice. Incorporating noisy preprocessing~\cite{HoNoisyPreProcessQKD2020, Sekatski2021} is a crucial extension that could lower the critical detection-efficiency threshold and increase robustness against experimental imperfections. In addition, experimental studies ~\cite{BennetExptSteering2012} show that linear steering inequalities can be modified to account for observed losses, as discussed in Section~\ref{DetectionEfficiency}. This suggests that our detection thresholds could be made more tolerant in practical implementations.
 Lastly, investigating alternative steering inequalities with improved loss and noise resilience~\cite{Pramanik2014, sumuncert} may increase the applicability of our method across various quantum network architectures. Such directions collectively aim to strengthen the viability of 1sDI-QKD as a secure and scalable solution for near-term quantum communication.

%



\end{document}